\newcommand{\ket}[1]{|#1\rangle}
\newcommand{\bra}[1]{\langle #1|}
\theoremstyle{plain}
\newtheorem{thm}{Theorem}[section]
\newtheorem{lem}[thm]{Lemma}
\newtheorem{prop}[thm]{Proposition}
\newtheorem{cor}{Corollary}[section]
\theoremstyle{definition}
\newtheorem{defn}{Definition}[section]
\theoremstyle{remark}
\numberwithin{equation}{section}
\title{Poincar\'e polynomials for Abelian symplectic quotients of pure $r$-qubits via wall-crossings}
\author[1,2]{Saeid Molladavoudi \thanks{E-mail: \href{mailto:smollada@uottawa.ca}{\texttt{smollada@uottawa.ca}}}}
\author[2,3]{Hishamuddin Zainuddin \thanks{E-mail: \href{mailto:hisham@upm.edu.my}{\texttt{hisham@upm.edu.my}}}}
\affil[1]{\textit{\small Department of Mathematics and Statistics, University of Ottawa, 585 King Edward, Ottawa ON, K1N 6N5, Canada}}
\affil[2]{\textit{\small Laboratory of Computational Sciences \& Mathematical Physics, Institute For Mathematical Research, Universiti Putra Malaysia, 43400 UPM Serdang, Selangor, Malaysia}}
\affil[3]{\textit{\small Department of Physics, Faculty of Science, Universiti Putra Malaysia, 43400 UPM Serdang, Selangor, Malaysia}}
\date{\vspace{-2ex}\small{Dedicated to Professor Syed Twareque Ali}}
\begin{document}
\maketitle

\begin{abstract} In this paper, we compute a recursive wall-crossing formula for the Poincar\'e polynomials and Euler characteristics of Abelian symplectic quotients of a complex projective manifold under a special effective action of a torus with non-trivial characters. An analogy can be made with the space of pure states of a composite quantum system containing $r$ quantum bits under action of the maximal torus of Local Unitary operations.
\end{abstract}

\section{Introduction} \label{introduction}

In physics, the space of pure multi-particle quantum states can be represented by a K\"ahler manifold $(M=\mathbb{P}_n,\omega_{FS})$, where $\omega_{FS}$ denotes the Fubini-Study symplectic form \cite{ashtekar1995,brody2001}. The group of Local Unitary (LU) operations is the symmetry group of a multi-partite quantum system containing $r$-isolated quantum particles. Particularly, when the system contains $r$ quantum bits (qubits) the LU group is represented by the compact, semi-simple Lie group $K = SU(2)^{\times_r}$ acting in a Hamiltonian way on the space of pure multi-partite quantum states $(M=\mathbb{P}_n,\omega_{FS})$, such that $n+1=2^r$, and is equipped with an equivariant moment map $\mu: \mathbb{P}_n \rightarrow \mathfrak{k}^*$ \cite{benvegnu2004,sawicki2011a,saeid2012b}. The Lie group $K$ itself is a subgroup of Special Unitary group $SU(n+1)$.

Now, consider the torus $T^r=(S^1)^r$ acting on the K\"ahler manifold $M=\mathbb{P}_n$, such that $2^r = n+1$. The torus $T^r$ is in fact the maximal torus of the LU group $K=SU(2)^r$. In fact, $(\mathbb{P}_n,\omega_{FS},T^r,\mu_T)$ is a Hamiltonian $T$-manifold, where $\mu_T:\mathbb{P}_n \rightarrow \mathfrak{t}^*_r$ denotes an equivariant moment map whose values $\xi=\mu_T(p)$, for $p \in \mathbb{P}_n$, are collections of diagonal elements of the \textit{reduced density matrices}, or sets of fixed phases for $r$-qubits, and by fixing them one can study geometrical and topological invariants of the associated symplectic reduced spaces $M_{\xi}= \mu_T^{-1}(\xi)/T$, as $\xi$ varies in $\mu_T(\mathbb{P}_n) \equiv \Delta$. Recall that $\mu_T(M)$ is the convex hull of the finite set of points $\mu_T(M^T)$, where $M^T$ denotes the fixed point set of the $T$-action on $M$. Moreover, for a regular value $\xi \in \Delta_{\text{reg}}$, the associated symplectic reduced space $M_{\xi}= \mu_T^{-1}(\xi)/T$ possesses at most orbifold singularities \cite{kirwan1984,kirwan1985}.

In this paper we are interested in utilizing a recursive wall-crossing formula for constructing topological invariants, specifically the Poincar\'e polynomials and the Euler characteristics of symplectic reduced spaces $M_{\xi}$ as $\xi$ crosses codimension-1 walls between two adjacent chambers as proposed in \cite{metzler1997,metzler2000}. Since for any regular value $\xi \in \Delta_{\text{reg}}$, the topological invariants of symplectic quotient $M_{\xi}$ will be a function of only chambers, the problem  becomes combinatorial in essence.

The outline of the paper is as follows. In section \ref{recursive} we review the recursive wall-crossing formula proposed in \cite{metzler2000} to find Poincar\'e polynomials for symplectic quotients of Hamiltonian torus $T$-manifolds. Then, in section \ref{recursive-poincare}, we obtain the recursive wall-crossing Poincar\'e polynomial and Euler characteristic for a regular value of the torus moment map $\mu_T$ for the Hamiltonian manifold $(\mathbb{P}_n,\omega_{FS},\mu_T,T^r)$, followed by detailed examples for cases $r=2,3$ in sub-section \ref{examples} and finally in section \ref{summary} we summarize the results and provide outlooks for future research.

\section{Recursive Wall-crossing Invariants} \label{recursive}

Consider $(M,\omega,\mu_T,T)$ as a Hamiltonian torus $T$-manifold such that $\mu_T: M \rightarrow \mathfrak{t}^*$ is the equivariant moment map. Given a point $\xi \in \mathfrak{t}^*$, the symplectic reduced space $M_{\xi}$ is defined as follows
\begin{equation} \label{regular-symplectic-quotient}
M_{\xi}=\mu_T^{-1}(\xi)/T\equiv M \sslash_{\xi}T.
\end{equation}
If $\xi$ is a regular value of $\mu_T$, then $M_{\xi}$ is called the Marsden-Weinstein symplectic reduced space \cite{marsden1974} and possesses at most finite quotient (orbifold) singularities and the induced symplectic structure $\omega_{\xi}$ on $M_{\xi}$ is defined as
\[
\pi^* \, \omega_{\xi} = i^* \omega,
\]
where $\pi: \mu_T^{-1}(\xi) \rightarrow M_{\xi}$ is the projection and $i: \mu_T^{-1}(\xi) \hookrightarrow M$ is the inclusion of the level set. By the convexity theorem  \cite{atiyah1982,guillemin1982}, the image $\Delta = \mu_T(M)$ is the convex hull of the image of the fixed point set $M^T$, so a convex polytope itself. The set of regular values of the moment map $\Delta_{\text{reg}}$ is then a finite union of open convex sub-polytopes (chambers) \cite{guillemin1982}. 
\begin{prop} \label{prop1}
 \cite{dh1982} Let $(M,\omega,\mu_T,T)$ be a compact and connected Hamiltonian $T$-manifold and consider $\xi, \eta \in \Delta_{\text{reg}}$ in the same chamber. Then the two symplectic quotients $M_{\xi}$ and $M_{\eta}$ are diffeomorphic.
\end{prop}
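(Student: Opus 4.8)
The plan is to show that over a single chamber the level sets of the moment map assemble into a locally trivial fiber bundle in a $T$-equivariant way, and then to pass to the quotient. Concretely, fix a chamber $C \subseteq \Delta_{\text{reg}}$ containing both $\xi$ and $\eta$, and consider the restriction $\mu_T : \mu_T^{-1}(C) \to C$. Because every point of $C$ is a regular value, this map is a surjective submersion; because $M$ is compact, it is proper. Ehresmann's fibration theorem then gives that $\mu_T : \mu_T^{-1}(C) \to C$ is a locally trivial fiber bundle, and since $C$ is an open convex subpolytope, hence contractible, the bundle is globally trivial: there is a diffeomorphism $\mu_T^{-1}(C) \cong \mu_T^{-1}(\xi) \times C$ commuting with the projection to $C$.

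The second step is to make this trivialization $T$-equivariant. The $T$-action on $M$ preserves each level set $\mu_T^{-1}(\zeta)$, since in the torus case the moment map is $T$-invariant, so $T$ acts on $\mu_T^{-1}(C)$ fiberwise over $C$. A trivialization as above amounts to a flat Ehresmann connection on the bundle; averaging an arbitrary Ehresmann connection over the compact group $T$ produces a $T$-invariant one whose horizontal distribution is still complementary to the vertical (tangent-to-fibers) distribution. Integrating this invariant connection along radial paths in $C$ emanating from $\xi$ yields a $T$-equivariant diffeomorphism $\Phi : \mu_T^{-1}(\xi) \times C \to \mu_T^{-1}(C)$ over $C$; in particular, for each $\zeta \in C$ the restriction $\Phi(\cdot,\zeta) : \mu_T^{-1}(\xi) \to \mu_T^{-1}(\zeta)$ is a $T$-equivariant diffeomorphism.

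The third step is to descend to the symplectic quotients. Since $\xi$ and $\eta$ are regular values, the $T$-action on $\mu_T^{-1}(\xi)$ and on $\mu_T^{-1}(\eta)$ is locally free, so $M_\xi$ and $M_\eta$ are orbifolds, and a $T$-equivariant diffeomorphism between the level sets induces a diffeomorphism of the quotient orbifolds, as it matches stabilizers and orbit spaces. Taking $\zeta = \eta$ gives $M_\xi \cong M_\eta$. Equivalently, one gets a fiber bundle $\mu_T^{-1}(C)/T \to C$ with fiber $M_\xi$, trivial because $C$ is contractible, whose fibers are precisely the $M_\zeta$. This is the ``diffeomorphism'' half of the Duistermaat--Heckman theorem of \cite{dh1982}, the other half recording that $[\omega_\zeta]$ varies affinely with $\zeta$.

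The main technical point to be careful about is the non-freeness of the action: because $\xi$ is only a regular value, with possibly nontrivial finite stabilizers, one must run the Ehresmann argument equivariantly and interpret the conclusion in the category of orbifolds, or, following Kirwan, of symplectic quotients with at most finite quotient singularities. An alternative route avoids bundle language entirely: choose a smooth path $\gamma : [0,1] \to C$ from $\xi$ to $\eta$, lift $\dot\gamma$ through the surjection $d\mu_T$ to a time-dependent $T$-invariant vector field $X_t$ defined near $\mu_T^{-1}(\gamma([0,1]))$, which is possible since $d\mu_T$ is onto along regular values and $T$ is compact, and let $\Psi_t$ be its flow; then $\Psi_1$ restricts to a $T$-equivariant diffeomorphism $\mu_T^{-1}(\xi) \to \mu_T^{-1}(\eta)$ that descends to $M_\xi \cong M_\eta$. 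Either way, properness coming from compactness of $M$ and contractibility of the chamber are the two structural facts doing the work.
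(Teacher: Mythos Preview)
Your proof is correct and is essentially the standard Duistermaat--Heckman argument. Note, however, that the paper does not supply its own proof of this proposition at all: it is stated with a citation to \cite{dh1982} and used as a black box, so there is nothing in the paper to compare your argument against beyond the reference itself. What you have written is precisely the argument behind that reference---equivariant Ehresmann over a contractible chamber, then descent to the quotient---so your proposal is not merely compatible with the paper but fills in the content of the citation.
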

According to Proposition \ref{prop1}, any topological invariant of the symplectic quotient $M_{\xi}$, when $\xi \in \Delta_{\text{reg}}$, is only a function of the combinatorial properties of the chamber that $\xi$ is in.
However, an invariant of a symplectic reduced space changes when the moment value $\xi$ crosses boundaries or critical walls, which can be expressed in terms of a ``wall-crossing'' formula. There exist wall-crossing formulas for invariants of symplectic quotients, such as the change in the cohomology class $[\omega]_{\text{red}}$ \cite{gs1989} or the Duistermaat-Heckman polynomial \cite{gls1996}. 

In \cite{metzler1997,metzler2000}, Metzler proposed a recursive wall-crossing formula for topological invariants of symplectic reduced spaces of Hamiltonian $T$-manifolds by using an object called a ``weighted X-ray'' encoding all the fixed point data of the torus $T$ and sub-tori actions. To investigate the torus action, he used the \textit{(infinitesimal) orbit-type} variant of Tolman's \textit{X-ray} \cite{tolman1998}, which not only includes the fixed point data, but also encodes the fixed point sets of sub-tori $H \subset T$. 

Recall that the \textit{orbit-type stratification} of a manifold $M$, with an action of a torus $T$, is given by the decomposition of $M$ into a finite set of the connected components of strata $M_H = \left\{ p \in M \, | \, T_p = H \right\}$, where $T_p$ is the \textit{stabilizer} (or isotropy) subgroup of point $p \in M$ and $H$ is a subtorus $H \subset T$. Equivalently, an infinitesimal orbit-type stratification is determined by the infinitesimal stabilizers, i.e. the connected components of $M_{\mathfrak{h}} = \left\{ p \in M \, | \, \mathfrak{t}_p = \mathfrak{h} \right\}$, where $\mathfrak{t}_p$ and $\mathfrak{h}$ are Lie algebras of the subgroups $T_p$ and $H$, respectively \cite{ginzburg2002}. Let $\left\{X_1, \cdots, X_m \right\}$ denotes the finite set of connected components of infinitesimal orbit-type strata, with corresponding stabilizers $T_1,\cdots,T_m$, i.e. $M=\bigcup_{j=1}^m X_j$. For each $X_j$, let $F_j$ denotes its closure, i.e. $\bar{X_j}=F_j$. Then, each $F_j \in \mathcal{F} = \left\{ F_1, \cdots , F_m \right\}$ is a connected component of the fixed point set $M^{T_j}$ of the subtorus $T_j \subset T$, as the stabilizer of a generic point in $F_j$. By the equivariant Darboux theorem, each $F_j$ is a symplectic manifold which is acted upon effectively and in a Hamiltonian fashion by the quotient torus $T/T_j$ \cite{guillemin1984}. The corresponding moment map $\mu_{T/T^j}$ is obtained by the restriction of the torus--$T$ moment map $\mu_T$ to the sub-manifold $F_j$. Then, each $\mu_{T/T^j}(F_j)$ is a convex sub-polytope on its own right with $\text{dim} \, (\mu_{T/T^j}(F_j)) = d - d_j$, where $d = \text{dim} \, T$ and $d_j = \text{dim} \, T^j$. For instance, if $T=T^j$, then $\mu_{T/T^j}(F_j)$ defines a point in $\mathfrak{t}^*$. 

\begin{defn} \label{x-ray-defn}
 \cite{tolman1998,metzler2000} An \textbf{X-ray} of a compact Hamiltonian $T$-manifold $(M,\omega,\mu_T,T)$ is a family of convex sub-polytopes $\left\{ \mu_{T/T^j}(F_j) \right\} \subset \mu_T(M)$ indexed by the set $\mathcal{F}$, which is a partially ordered set under inclusion. A wall of an X-ray is denoted by the pair $(F_j,\mu_{T/T^j}(F_j))$. 
\end{defn}

The \textit{vertices} of an X-ray, which are single points in $\mathfrak{t}^*$, are images under the moment map $\mu_T$ of $T$-fixed points components, while the \textit{lines} are images of strata corresponding to circles $S^1$ in $T$, i.e. they are images under the moment map $\mu_{T^d/T^{d-1}}$ of $T^{d-1}$-fixed points components, and so on for higher dimensional walls.

Now, let $\left\{ w_{l,j} \in \mathfrak{t}^* \right\}$ denote the weights of the infinitesimal action of torus $T$ on the normal bundle $N_j$ to each $F_j$, for $l = 1, \cdots , \text{rank}(N_j)/2$. The reason is that the normal bundle $N_j$ to each $F_j$ has a natural isotopy class of complex structures induced from the natural isotopy class of compatible, invariant almost complex structures on $M$ because of the $T$-invariant symplectic form \cite{mcduff1995}. Therefore, for all $p \in M^{T^j}$, the weights of the infinitesimal action of $T^j$ on $T_pM$ are well-defined vectors in $\mathfrak{t}^*$, which allows us to define the weights of the infinitesimal torus $T$-action on the normal bundle $N_j$ to each $F_j$. Of course, for each $N_j$ there are additional zero weights corresponding to $T_pF_j$, but this will allow us to record the weight data package and also the multiplicities of weights pointing in the same direction more consistently \cite{metzler1997}. 
 
The fixed point data that Metzler's recursive wall-crossing formula depends on, comprise of the fixed point connected components $F_j$, the weights $\left\{ w_{l,j} \right\}$ and the points $\mu_T(F_j) \in \mu_T(M)$, where the data involved in dimension $d$ wall-crossing is obtained by the wall-crossing procedure in dimension $d-1$. Hence, a \textbf{weighted X-ray} of a Hamiltonian $T$-manifold $(M,\omega,\mu_T,T)$ is an X-ray, as in Definition \ref{x-ray-defn}, together with the weights $\left\{ w_{l,j} \right\}$ assigned to each stratum $F_j$. There are several properties of the weighted X-rays of Hamiltonian $T$-manifolds, but only the following ones are in particular relevant in this paper: first, the weights attached to the $T$-fixed point components, as vertices of an X-ray, lie along \textit{one}-dimensional walls of the corresponding X-ray and it is important to keep record of multiplicities of weights pointing in the same direction along the lines connecting vertices of an X-ray, and second, the weight data at these vertices often completely determine the X-ray \cite{metzler2000}. 

Recall that the set of regular values of the moment map $\mu_T$ is the complement of the union of the walls of an X-ray, i.e. 
\[ 
\Delta_{\text{reg}} = \mu_T(M) \backslash \bigcup_j \mu_{T/T^j}(F_j),
\]
which itself is an open set with a finite number of components, so called chambers. Equivalently, for each $F_j$ as an effective Hamiltonian $T/T^j$ space, we can define the set of regular values for the associated moment map $\mu_{T/T^j}$ as follows:
\begin{equation} \label{sub-chambers-eq}
\Delta^{(j)}_{\text{reg}} = \mu_{T/T^j}(F_j) \backslash \bigcup_{F_k < F_j} \mu_{T/T^k}(F_k),
\end{equation}
which is a relatively open set with finite number of components, as convex sub-polytopes or \textit{sub-chambers} of the X-ray. In other words, every point $\xi \in \mathfrak{t}^*$ in the moment polytope $\mu_T(M) \subset \mathfrak{t}^*$ is a regular value for some Hamiltonian action, for instance the vertices are regular values, since they correspond to the $T/T$ torus, which is trivial with a zero-dimensional Lie algebra, and the points in the one-dimensional walls (or the lines) that are connecting the vertices are also regular values of the moment map $\mu_{T^d/T^{d-1}}  = \mu_{S^1}$, and so on for higher dimensional walls in the moment polytope. More precisely, consider a stratum $F_j \in \mathcal{F}$ in such a way that the torus $T^j$ is the stabilizer subgroup of a generic point in $F_j$ and let $\xi \in \Delta^{(j)}_{\text{reg}} \subset \mu_{T/T^j}(F_j)$ as a regular value of the moment map $\mu_{T/T^j}$. Then, the corresponding regular symplectic reduction can be obtained as
\begin{equation} \label{regular-sub-reduction}
M_{\xi}^{(j)} = \left( \mu^{-1}_{T/T^j}(\xi) \cap F_j \right)/ \left( T/T^j \right).
\end{equation}
However, the symplectic reduction \eqref{regular-sub-reduction} has to be distinguished from the singular symplectic reduction $M_{\xi} = \mu_T^{-1}(\xi)/T$, when $\xi \not\in \Delta_{\text{reg}}$, that is in general a symplectic stratified space in which $M_{\xi}^{(j)}$ is a stratum. In addition, we have to specify which strata $F_j$ we are restricting for the symplectic reduction purpose, since there may be more than one strata whose moment images contain $\xi$, for instance if $F_k \subset F_j$, then $\mu_{T/T^k}(F_k) \subset \mu_{T/T^j}(F_j)$ but not vice versa. In fact, the recursive wall-crossing formula \eqref{recursive-wall-crossing-eq} below, takes into account all of these lower-dimensional ``sub-reductions'', but before that we need the following lemma that describes how two sub-chambers of an X-ray can meet.
\begin{lem} \label{sub-chambers-meet-lem}
\cite{metzler1997,metzler2000} Consider the weighted X-ray of a Hamiltonian torus-$T$ manifold $(M,\omega,\mu_T,T)$ and let $(F_j,\mu_{T/T^j}(F_j))$ be a wall in such a way that $(F_j,\Delta^{(j,1)}_{\text{reg}})$ and $(F_j,\Delta^{(j,2)}_{\text{reg}})$ are two adjacent sub-chambers from Eq. \eqref{sub-chambers-eq}, whose closures intersect in a codimension-1 subwall (or principal subwall) in $\mu_{T/T^j}(F_j)$. Then,
\begin{enumerate}
\item $\delta:=\bar{\Delta}^{(j,1)}_{\text{reg}} \cap \bar{\Delta}^{(j,2)}_{\text{reg}}$ is a convex polytope and let $\text{Aff}(\delta)$ denotes its affine span.
\item For each stratum $F_k \subset F_j$ such that $\mu_{T/T^k}(F_k)$ is a principal subwall in $\text{Aff}(\delta)$ and $\mu_{T/T^k}(F_k) \cap \delta \neq \emptyset$, then $\delta$ lies entirely in one sub-chamber $(F_k,\Delta^{(k)}_{\text{reg}})$ of $F_k$.
\item For a non-overlapping weighted X-ray (i.e. a weighted X-ray in which for two different strata their moment images do not overlap), $\delta$ lies in a unique sub-chamber $\Delta^{(k)}_{\text{reg}}$ of a unique principal subwall $\mu_{T/T^k}(F_k)$ of $\mu_{T/T^j}(F_j)$.
\end{enumerate}
\end{lem}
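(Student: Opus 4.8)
\noindent\emph{Proof strategy.} The plan is to treat the three parts in order; the first two reduce to elementary convex geometry once the local normal form of the moment map is in hand.

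\emph{Part (1)} is essentially formal. Each sub-chamber $\Delta^{(j,i)}_{\text{reg}}$ is a relatively open convex sub-polytope of $\mu_{T/T^j}(F_j)$, so its closure is a convex polytope, and therefore $\delta = \overline{\Delta^{(j,1)}_{\text{reg}}} \cap \overline{\Delta^{(j,2)}_{\text{reg}}}$ is a convex polytope, being an intersection of two convex polytopes. Adjacency along a principal subwall means precisely that $\dim \delta = \dim \mu_{T/T^j}(F_j) - 1$, so $\text{Aff}(\delta)$ is an affine hyperplane in $\text{Aff}(\mu_{T/T^j}(F_j))$; concretely it is a translate of $\text{Ann}(\mathfrak{t}^k) \subset \mathfrak{t}^*$ for every stratum $F_k$ whose principal subwall $\mu_{T/T^k}(F_k)$ contains $\delta$. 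Nothing deep is needed here.

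\emph{For Part (2)} I would first dispose of a point-set preliminary. The relative interior of $\delta$ is connected, and each of its points must lie on some codimension-one subwall of $\mu_{T/T^j}(F_j)$ --- a point lying only on codimension-$\ge 2$ subwalls cannot separate two chambers --- while a codimension-one subwall through a relative-interior point of $\delta$ must have affine span exactly $\text{Aff}(\delta)$ (otherwise $\delta$ is locally squeezed onto the relative boundary of $\text{Aff}(\delta)$, or onto a codimension-two set). Hence $\delta$ lies in the union of the principal subwalls of $\mu_{T/T^j}(F_j)$ with affine span $\text{Aff}(\delta)$; in particular --- automatically once the X-ray is non-overlapping, and in general after replacing $\delta$ by $\delta \cap \mu_{T/T^k}(F_k)$ --- we may work inside $\mu_{T/T^k}(F_k)$. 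The substantive claim is then that $\text{relint}(\delta)$ meets no wall $\mu_{T/T^l}(F_l)$ of the X-ray of $F_k$ with $F_l < F_k$. I would argue by contradiction: pick $q \in \text{relint}(\delta)$ on such a wall, chosen so that $\text{Aff}(\mu_{T/T^l}(F_l))$ is a hyperplane inside $\text{Aff}(\delta)$ (hence of codimension two in $\text{Aff}(\mu_{T/T^j}(F_j))$), and pass to the equivariant local normal form near $\mu_T^{-1}(q)$. There the subwalls through $q$ form a finite central arrangement of affine subspaces parallel to the annihilators of the isotropy subalgebras; one checks that $q \in \text{relint}(\delta)$ forces $\text{Aff}(\delta)$ to be the unique codimension-one member, so the local chambers of $\mu_{T/T^j}(F_j)$ at $q$ are just the two sides of $\text{Aff}(\delta)$, namely $\Delta^{(j,1)}_{\text{reg}}$ and $\Delta^{(j,2)}_{\text{reg}}$, whence $\delta$ fills the whole codimension-one ball $\text{Aff}(\delta) \cap B$ near $q$. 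Inside $F_k$, however, $\mu_{T/T^l}(F_l)$ is of codimension one and genuinely separates the sub-chambers of $F_k$ at $q$, so $\delta$ would meet two of them --- the desired contradiction.

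\emph{For Part (3)} I add the non-overlapping hypothesis. By Part (2), every stratum $F_k \subset F_j$ whose principal subwall lies in $\text{Aff}(\delta)$ and meets $\delta$ in fact contains $\delta$ and carries it into a single sub-chamber; two distinct such strata would have moment images that are full-dimensional convex polytopes in $\text{Aff}(\delta)$, both containing the full-dimensional polytope $\delta$, hence overlapping, which is excluded. So the stratum $F_k$ --- and then the sub-chamber $\Delta^{(k)}_{\text{reg}}$ of it containing $\delta$ --- is unique. The main obstacle is the reconciliation step inside Part (2): the very same wall $\mu_{T/T^l}(F_l)$ is codimension two in $\mu_{T/T^j}(F_j)$, hence invisible to $\Delta^{(j,1)}_{\text{reg}}$ and $\Delta^{(j,2)}_{\text{reg}}$ as a separating wall, yet codimension one inside $F_k$, where it does separate. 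Making that contradiction airtight --- i.e. showing such a configuration cannot arise --- is exactly where one must invoke the rigidity of a \emph{weighted} X-ray (the compatibility of the weights along the nested strata $F_l \subset F_k$ and the resulting constraint on the position of $\mu_{T/T^l}(F_l)$ inside $\mu_{T/T^k}(F_k)$), rather than the bare combinatorics of a nested family of convex polytopes.
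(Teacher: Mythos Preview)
The paper does not give its own proof of this lemma: it is quoted verbatim from Metzler \cite{metzler1997,metzler2000} and stated without argument, so there is nothing in the present paper to compare your attempt against. Any substantive comparison would have to be made against Metzler's original proof, which lies outside this paper.

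That said, a brief comment on your attempt on its own terms. Part~(1) is fine and is indeed formal. Part~(3) is also the right idea: once Part~(2) is in hand, non-overlapping forces uniqueness exactly as you say. The weak point is Part~(2). Your contradiction argument hinges on the claim that, near a relative-interior point $q$ of $\delta$, the only codimension-one subwall of $\mu_{T/T^j}(F_j)$ through $q$ is $\text{Aff}(\delta)$ itself; you assert this follows from the local normal form, but you have not actually shown why a second codimension-one subwall through $q$ would be incompatible with $q \in \text{relint}(\delta)$. You yourself flag this (``making that contradiction airtight \ldots is exactly where one must invoke the rigidity of a \emph{weighted} X-ray''), but invoking ``rigidity'' is not a proof --- you need to spell out which compatibility condition on the weights along $F_l \subset F_k \subset F_j$ is violated. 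As written, Part~(2) is a plausible sketch rather than a complete argument, and since it carries the whole lemma, the proposal is not yet a proof.
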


Now, consider the following scenario in which $F_j \in \mathcal{F}$ contains two sub-chambers $(F_j,\Delta^{(j,1)}_{\text{reg}})$ and $(F_j,\Delta^{(j,2)}_{\text{reg}})$, whose $\delta:=\bar{\Delta}^{(j,1)}_{\text{reg}} \cap \bar{\Delta}^{(j,2)}_{\text{reg}}$ contains two or more principal subwalls $\left\{ (F_k,\mu_{T/T^k}(F_k)) \right\}_k$ with a common $\text{Aff}(\delta)$. According to the Lemma \ref{sub-chambers-meet-lem}, in every principal subwall $(F_k,\mu_{T/T^k}(F_k))$ a \textit{unique} sub-chamber $(F_k,\Delta^{(k)}_{\text{reg}})$ separates $\Delta^{(j,1)}_{\text{reg}}$ and $\Delta^{(j,2)}_{\text{reg}}$ into two components, denoted by $\Lambda_1$ and $\Lambda_2$, since $(F_k,\mu_{T/T^k}(F_k))$ is codimension-1 subwall in $(F_j,\mu_{T/T^j}(F_j))$, for every $k$. Let $\Pi:\text{Lin}\left( \mu_{T/T^j}(F_j) \right) \rightarrow \mathbb{R}$ denotes a map from the unique linear subspace $\text{Lin}\left( \mu_{T/T^j}(F_j) \right)$ parallel to the affine span $\text{Aff} \left( \mu_{T/T^j}(F_j) \right)$ onto the $\text{Lin} \left( \mu_{T/T^j}(F_j) \right) / \text{Lin}( \delta ) \cong \mathbb{R}$, and
\begin{equation}
w^{(j)}_{l,k} = w_{l,k} \cap \text{Lin}\left( \mu_{T/T^j}(F_j) \right) / \text{Lin}(\delta),
\end{equation}
where $w^{(j)}_{l,k}$ are the weights of infinitesimal torus $T$-action on the normal bundle $N_k$ to the stratum $F_k$, which lie along the wall $(F_j,\mu_{T/T^j}(F_j))$. Then, we can define
\begin{eqnarray} \label{positive-negative-weights}
b_k & = & \text{number of weights} \, \left\{ w^{(j)}_{l,k} \right\}_l \text{lying in} \, \, \Pi(\Lambda_1), \nonumber \\
f_k & = & \text{number of weights} \, \left\{ w^{(j)}_{l,k} \right\}_l \text{lying in} \, \,  \Pi(\Lambda_2),
\end{eqnarray}
where $b_k$ ($f_k$) denotes the number of weights pointing backward (forward) during the wall-crossing procedure from $\Delta^{(j,1)}_{\text{reg}}$ to $\Delta^{(j,2)}_{\text{reg}}$. In section \ref{examples} we will show explicitly through examples how to count these weights as we cross two adjacent walls in the moment polytope.
\begin{defn} \label{recursive-invariant}
A \textbf{recursive invariant} $I$ of a weighted X-ray with values in a ring $R$ consists of two following pieces of data:
\begin{enumerate}
\item a map $I: \mathcal{Q} \rightarrow R$, where $\mathcal{Q} = \left\{ \Delta^{(j)}_{\text{reg}} \right\}_j$ is the set of all sub-chambers of the weighted X-ray, and
\item a function $C_I: \mathbb{Z} \times \mathbb{Z} \rightarrow R$ called a \textit{wall-crossing} function of $I$,
\end{enumerate}
such that they satisfy the following conditions for each wall $(F_j,\mu_{T/T^j}(F_j))$:
\begin{enumerate}
\item Given two adjacent sub-chambers denoted by $(F_j,\Delta^{(j,1)}_{\text{reg}})$ and $(F_j,\Delta^{(j,2)}_{\text{reg}})$, whose $\delta:=\bar{\Delta}^{(j,1)}_{\text{reg}} \cap \bar{\Delta}^{(j,2)}_{\text{reg}}$ contains two or more principal subwalls $\left\{ (F_k,\mu_{T/T^k}(F_k)) \right\}_k$, as described above, and let $b_k$ and $f_k$ be the number of infinitesimal weights of torus-$T$ action on the normal bundle $N_k$ in $N_j$ pointing toward $\Delta^{(j,1)}_{\text{reg}}$ and $\Delta^{(j,2)}_{\text{reg}}$ respectively. Then,
\begin{equation} \label{recursive-wall-crossing-eq}
I(\Delta^{(j,2)}_{\text{reg}}) - I(\Delta^{(j,1)}_{\text{reg}}) = \sum_{k}{C_I(b_k,f_k)I(\Delta^{(k)}_{\text{reg}})}.
\end{equation}
\item Considering a sub-chamber $(F_j,\Delta_{\text{reg}}^{(j)})$ adjacent to the boundary of $\mu_{T/T^j}(F_j)$, let $(F_k,\mu_{T/T^k}(F_k))$ be the unique sub-chamber separating $\Delta_{\text{reg}}^{(j)}$ from the exterior of $\mu_{T/T^j}(F_j)$ and let $f$ be the number of weights of $N_k$ in $N_j$ pointing into $N_j$ and $b$ be the number pointing out. Then,
\begin{equation}
I(\Delta_{\text{reg}}^{(j)}) = C_I(b,f) \, I(\Delta^{(k)}_{\text{reg}}).
\end{equation}
\end{enumerate}
\end{defn}

It is implicit that $I=0$, for points outside of the moment polytope $\Delta$. According to \cite{metzler2000}, the values of a recursive invariant $I$ are completely determined by the function $C_I$ and its values on the vertices. In fact, the values of $I$ in a $d$-dimensional sub-chamber of a moment polytope can be obtained by starting outside the polytope and crossing a finite number of walls with lower dimensions recursively.
The following theorem asserts that an invariant is recursive if it is for circle $S^1$-action:
\begin{thm} \label{circle-recursive}
\cite{metzler1997,metzler2000} Let $I$ be a topological invariant of symplectic manifolds. Assume that the X-ray invariant defined by $I$ as in definition \ref{recursive-invariant} is recursive on the class of X-rays coming from Hamiltonian circle actions. Then this invariant is recursive on all Hamiltonian X-rays, with the same wall-crossing function.
\end{thm}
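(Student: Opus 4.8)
The plan is to argue by induction on $d=\dim T$, the case $d=1$ being precisely the hypothesis (at the $0$-dimensional vertex walls the recursion conditions are vacuous). So suppose that on every compact connected Hamiltonian manifold whose torus has dimension strictly less than $d$ the invariant $I$ induces, in the sense of Definition~\ref{recursive-invariant}, a recursive X-ray invariant with wall-crossing function $C_I$; and consider a Hamiltonian $T$-manifold $(M,\omega,\mu_T,T)$ with $\dim T=d$. One must verify conditions (1) and (2) of Definition~\ref{recursive-invariant} at every wall $(F_j,\mu_{T/T^j}(F_j))$ of the weighted X-ray of $M$, and I would split these into the walls with $F_j\subsetneq M$ and the single top wall $F_j=M$, i.e.\ $\mu_{T/T^j}(F_j)=\Delta$.

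For a wall with $F_j\subsetneq M$ the verification is essentially bookkeeping. Here $T^j\neq\{e\}$, so $F_j$ is a compact connected symplectic manifold (orbifold) on which the torus $T/T^j$, of dimension $d-\dim T^j<d$, acts effectively and in a Hamiltonian way with moment map the restriction of $\mu_T$, and whose weighted X-ray is exactly the family of walls $\{(F_k,\mu_{T/T^k}(F_k)):F_k\subseteq F_j\}$ with the inherited weights. One checks that the sub-reductions of \eqref{regular-sub-reduction} and the projected weights of \eqref{positive-negative-weights} attached to the wall $(F_j,\mu_{T/T^j}(F_j))$ of $M$ agree on the nose with the corresponding data for the X-ray of $F_j$ viewed as a $T/T^j$-manifold; hence conditions (1) and (2) at this wall are literally the recursion identities for the top wall of $F_j$, which hold with wall-crossing function $C_I$ by the inductive hypothesis.

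The content of the theorem is concentrated in the top wall $F_j=M$. Let $\delta$ be the codimension-$1$ wall between two adjacent chambers $\Delta^{(1)}_{\text{reg}},\Delta^{(2)}_{\text{reg}}$ of $\Delta$, contained in the union of principal subwalls $\{(F_k,\mu_{T/T^k}(F_k))\}_k$ sharing the affine span $\text{Aff}(\delta)$. Each $F_k$ is a connected component of $M^{T^k}$ with $\dim T^k=1$; since all the $\mu_{T/T^k}(F_k)$ span a translate of $\text{Lin}(\delta)$, the circles $T^k$ coincide, say $T^k=S\cong S^1$, and we fix a complementary subtorus $G\subset T$ with $T=G\times S$. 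Fix a generic $\xi_0$ in the relative interior of $\delta$ (through which no wall of dimension $\le d-2$ passes), and let $\xi_1,\xi_2$ be regular values immediately on the two sides of $\delta$, chosen with the same $G$-component $c$ as $\xi_0$. By reduction in stages $M_{\xi_i}\cong P\sslash_{\eta(\xi_i)}S$, where $P:=M\sslash_cG$ is a compact connected Hamiltonian $S$-space and $\eta(\xi_i)\in\mathfrak{s}^*$ is the transverse component of $\xi_i$; crossing $\delta$ corresponds to $\eta$ crossing the critical value $\eta(\xi_0)$ of $\mu_S$ on $P$. Because $\xi_0$ meets only the $(d-1)$-dimensional subwalls, the $S$-fixed components of $P$ at level $\eta(\xi_0)$ are exactly the sub-reductions $M^{(k)}_{\xi_0}$ of \eqref{regular-sub-reduction} (with $G$ acting on $F_k$ through the isomorphism $G\cong T/T^k$), and the $S$-weights on the normal bundle to $M^{(k)}_{\xi_0}$ in $P$ are the restrictions $w_{l,k}|_{\mathfrak{t}^k}$ — precisely the projected weights of \eqref{positive-negative-weights} for the top wall, whose backward and forward counts are $b_k$ and $f_k$. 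Applying the hypothesis to the circle space $P$ then gives
\[
I(M_{\xi_2})-I(M_{\xi_1})=\sum_k C_I(b_k,f_k)\,I(M^{(k)}_{\xi_0})=\sum_k C_I(b_k,f_k)\,I(\Delta^{(k)}_{\text{reg}}),
\]
which is \eqref{recursive-wall-crossing-eq}. Condition (2) follows from the same computation with one of $\Delta^{(1)}_{\text{reg}},\Delta^{(2)}_{\text{reg}}$ replaced by the exterior of $\Delta$, on which $I\equiv0$.

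I expect the main obstacle to be making this dictionary precise. Two points require care. First, one must justify the reduction in stages and, in particular, that the intermediate level $c$ for $\mu_G$ can be taken to be a regular value — or, failing that, carry out the whole argument in the category of symplectic orbifolds and their X-rays, for which the recursive formalism and the circle case remain available. Second, one must check that the $S$-fixed-point and normal-weight data on $P$ match the quantities in \eqref{positive-negative-weights} on the nose, multiplicities of parallel weights included; this is where Lemma~\ref{sub-chambers-meet-lem} is used, since it is what pins $\delta$ down inside a unique sub-chamber $\Delta^{(k)}_{\text{reg}}$ of each principal subwall, so that the value $I(M^{(k)}_{\xi_0})$ is unambiguously $I(\Delta^{(k)}_{\text{reg}})$. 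Once this is settled, the recursion identities are an immediate consequence of the circle hypothesis, so the real work is the structural reduction of a $d$-dimensional wall-crossing to a one-dimensional one rather than any fresh analysis of $I$ itself.
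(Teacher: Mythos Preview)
The paper does not prove Theorem~\ref{circle-recursive}; it is quoted verbatim from \cite{metzler1997,metzler2000} and used as a black box, so there is no ``paper's own proof'' to compare against.

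That said, your sketch is the right one and is essentially Metzler's argument: induct on $\dim T$, handle lower walls by the inductive hypothesis applied to the $(T/T^j)$-space $F_j$, and for the top wall reduce in stages by a complementary $(d-1)$-torus $G$ so that the crossing becomes a circle wall-crossing on $P=M\sslash_cG$. You have also correctly flagged the two genuine technical points --- regularity of the intermediate level and the exact matching of fixed components and normal weights on $P$ with the data in \eqref{positive-negative-weights}. One small sharpening: rather than seeking a complementary subtorus $G\subset T$ (which need not exist on the nose), it is cleaner to work with the quotient $T/S$ and the induced $T/S$-moment map, or equivalently to choose a generic transverse direction in $\mathfrak t$ and perform the first reduction at the Lie-algebra level; this sidesteps any splitting issue and is how the reduction-in-stages step is typically phrased in \cite{metzler1997,metzler2000}. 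With that adjustment your outline would constitute a faithful reconstruction of the cited proof.
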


In other words, Theorem \ref{circle-recursive} asserts that any wall-crossing invariant can be reduced to circle actions. By using the Theorem \ref{circle-recursive}, Metzler found in \cite{metzler1997,metzler2000} a recursive wall-crossing formula for the Poincar\'e polynomials of symplectic reduced spaces of a general Hamiltonian torus-$T$ manifold $(M,\omega)$ as follows
\begin{equation} \label{recursive-wallcrossing-formula}
P_t(M_{\xi_2}) - P_t(M_{\xi_1}) = P_t(F_j) \, C(b,f),
\end{equation}
where 
\begin{equation} \label{wall-crossing-function}
C(b,f) = \frac{t^{2b}-t^{2f}}{1-t^2} = \begin{cases}
t^{2f-2}+ t^{2f-4} + \cdots + t^{2b}, \quad f>b \\
-t^{2b-2}- t^{2b-4} - \cdots - t^{2f}, \quad b>f
\end{cases},
\end{equation}
and $P_t(F_j) \equiv P_t(M^{(j)}_{\zeta}) \in \mathbb{Z}[t]$ is the Poincar\'e polynomial for symplectic quotient $M^{(j)}_{\zeta} = (\mu_{T/T^j}^{-1}(\zeta) \cap F_j)/ (T/T^j)$, as defined in Eq. \eqref{regular-sub-reduction} for $\zeta \in \Delta_{\text{reg}}^{(j)}$, which is a regular value of the moment map $\mu_{T/T^j}$ associated with the Hamiltonian torus space $(F_j,T/T^j,\mu_{T/T^j})$. In Eq. \eqref{wall-crossing-function}, $f \, (b)$ is the number of positive (negative) weights in the normal bundle to $F_j$. Later, in section \ref{examples}, we see how the Theorem \ref{circle-recursive} will facilitate counting the number of positive and negative weights along the moment image of the normal bundle to the fixed points of the torus action. For more details on weighted X-rays and their recursive invariants the readers can refer to \cite{metzler1997,metzler2000}.

\section{Poincar\'e Polynomials for Abelian Quotients} \label{recursive-poincare}
Recall that the Local Unitary group $K=SU(2)^r$ acts on K\"ahler manifold $M=\mathbb{P}_n$, as the space of pure states of multi-particle quantum system consisting of $r$ quantum bits, each with a Hilbert space $\mathcal{H}_i = \mathbb{C}^2$, for $i=1, \cdots, r$, and $\mathcal{H} = \bigotimes_r \mathcal{H}_i$. In physics, points on the manifold $M$ are in fact positive semi-definite, trace-class and Hermitian matrices $\rho$, which can be written as tensor product of the density matrices of the sub-systems as $\rho = \ket{\Psi}\bra{\Psi} = \ket{\psi_1}\bra{\psi_1} \otimes \ket{\psi_2}\bra{\psi_2} \otimes \cdots \otimes \ket{\psi_r}\bra{\psi_r}$, where $\ket{\psi_i} \in \mathcal{H}_i$. Let $g=(g_1,g_2, \cdots,g_r) \in K$, where $g_i \in SU(2)$, then the $K$-action on $M$ can be determined as $g \, . \, \rho = g_1 \, \ket{\psi_1}\bra{\psi_1} \, g_1^{-1} \otimes \cdots \otimes g_r \, \ket{\psi_r}\bra{\psi_r} \, g_r^{-1}$. The torus $T^r=(S^1)^r$ is the maximal torus of compact Lie group $K$, whose elements can also be written in the form 
\[
T^r \ni (t_1,t_2, \cdots , t_r) \equiv \left( \text{diag}\,(t_1,t_1^{-1}), \cdots , \text{diag}\,(t_r,t_r^{-1})\right), \quad t_i = e^{i \, \gamma_i}, \, \gamma_i \in \mathbb{R}.
\]
More precisely, the $T^r$ action on $\mathbb{P}_n$ is diagonalizable through the homomorphism $\varphi: T^r \rightarrow U(n+1)$ as follows
\begin{equation} \label{action}
(t_1,t_2, \cdots , t_r) \mapsto \left( \prod_{i=1}^{r}t_i^{a_{i,1}}, \prod_{i=1}^{r}t_i^{a_{i,2}}, \cdots, \prod_{i=1}^{r}t_i^{a_{i,n+1}}\right),
\end{equation}
for some $a_{i,j} \in \left\{ \pm 1 \right\}$, where $\sum_{k=0}^{r}{\binom{r}{k}}=2^r= n+1$. The $r\times(n+1)$ matrix $A$ contains all $a_{i,j}$s as columns as follows
\[
A= \left( \begin{matrix}
a_{1,1} & a_{1,2} & \cdots & a_{1,n+1} \\
a_{2,1} & a_{2,2} & \cdots & a_{2,n+1} \\
\vdots & \vdots & \ddots  & \vdots \\
a_{r,1} & a_{r,2} & \cdots & a_{r,n+1} \\
\end{matrix} \right)_{r\times (n+1)}.
\]

The torus $T^r$ is a subgroup of the torus $T^{n+1}$, for which the components of the moment map $\Phi: M \rightarrow \mathfrak{t}^*_{n+1}$ are \textit{perfect Morse functions} with critical set $M^{T^{n+1}}$. 
\begin{lem} \label{morse-lemma}
Let $T^r$ be a subtorus of the torus $T^{n+1}$ as described above and assume that the fixed point set $M^{T^r}$ is finite. Then $M^{T^r} = M^{T^{n+1}}$.
\end{lem}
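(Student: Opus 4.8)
The plan is to reduce the statement to a weight analysis of the linear $T^r$-action on $\mathbb{C}^{n+1}$ induced by the homomorphism $\varphi$, and then to play the hypothesis that $M^{T^r}$ is finite against the equality $n+1=2^r$.

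First I would dispose of the easy inclusion. For any subtorus $T^r \subset T^{n+1}$ one has $M^{T^{n+1}} \subseteq M^{T^r}$, since a point fixed by the larger torus is a fortiori fixed by the smaller one; and $M^{T^{n+1}}$ is well known to consist exactly of the $n+1$ coordinate points $p_j = [0:\cdots:1:\cdots:0] \in \mathbb{P}_n$, as these are the only fixed points of the (effective) $T^{n+1}/S^1$-action on $\mathbb{P}_n$. So it remains to prove $M^{T^r} \subseteq M^{T^{n+1}}$.

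Next I would describe $M^{T^r}$ explicitly. Since $\varphi$ takes values in the diagonal torus of $U(n+1)$, the representation $\mathbb{C}^{n+1}$ splits into coordinate weight lines: the line $\mathbb{C}e_j$ carries the character $t \mapsto \prod_{i=1}^r t_i^{a_{i,j}}$, i.e. the weight $\alpha_j := (a_{1,j},\dots,a_{r,j}) \in \{\pm 1\}^r \subset \mathbb{Z}^r$, which is precisely the $j$-th column of $A$. Grouping the indices according to their weight yields $\mathbb{C}^{n+1} = \bigoplus_{\alpha} V_\alpha$ with $V_\alpha = \mathrm{span}\{e_j : \alpha_j = \alpha\}$. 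A standard computation (a point of $\mathbb{P}_n$ is $T^r$-fixed iff its homogeneous vector is a common eigenvector of all $\varphi(t)$, hence has all nonzero entries in indices of equal weight) shows that the connected components of $M^{T^r}$ are exactly the linear subspaces $\mathbb{P}(V_\alpha)$, where $\alpha$ ranges over the columns of $A$. Verifying this last identification is the only genuinely non-formal step; everything else is bookkeeping.

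Finally I would invoke the hypotheses. If $M^{T^r}$ is finite then every $\mathbb{P}(V_\alpha)$ is a single point, so $\dim V_\alpha \le 1$ for all $\alpha$, i.e. the columns $\alpha_1,\dots,\alpha_{n+1}$ of $A$ are pairwise distinct elements of $\{\pm 1\}^r$. Since $\lvert \{\pm 1\}^r\rvert = 2^r = n+1$, a counting argument upgrades ``distinct'' to ``exhaustive'': the columns of $A$ form a bijection onto $\{\pm 1\}^r$, so each $V_\alpha$ is one-dimensional and spanned by a single standard basis vector. Hence $M^{T^r} = \{[e_1],\dots,[e_{n+1}]\} = \{p_1,\dots,p_{n+1}\} = M^{T^{n+1}}$, as claimed. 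The only subtlety worth flagging is this pigeonhole step, where finiteness of $M^{T^r}$ plus $n+1 = 2^r$ is exactly what forces the weight decomposition to be the full coordinate decomposition.
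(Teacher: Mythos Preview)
Your argument is correct, but it proceeds along a genuinely different line from the paper's own proof. The paper uses a Morse-theoretic counting argument: for a generic $X\in\mathfrak t_{n+1}$ the component $\phi_X=\langle\Phi(\cdot),X\rangle$ is a perfect Morse function with critical set $M^{T^{n+1}}$, so $\sum_i\dim H^i(M)=|M^{T^{n+1}}|$; likewise, since $M^{T^r}$ is finite, a generic component of $\mu_T$ is a perfect Morse function with critical set $M^{T^r}$, giving $\sum_i\dim H^i(M)=|M^{T^r}|$. The obvious inclusion $M^{T^{n+1}}\subseteq M^{T^r}$ together with equality of cardinalities then forces $M^{T^r}=M^{T^{n+1}}$.

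By contrast, you bypass Morse theory entirely and work directly with the weight decomposition of the diagonal $T^r$-representation on $\mathbb C^{n+1}$: the fixed components in $\mathbb P_n$ are the projectivized weight spaces $\mathbb P(V_\alpha)$, and finiteness forces each $V_\alpha$ to be one-dimensional, i.e.\ the columns $\alpha_j$ of $A$ are pairwise distinct, hence each $V_{\alpha_j}=\mathbb C e_j$. The paper's approach is more robust---it applies verbatim to any compact Hamiltonian $T$-manifold with finite fixed locus, and uses neither the explicit form of the weights nor the relation $n+1=2^r$. Your approach is more elementary and more explicit for this particular $M=\mathbb P_n$, and as a bonus yields the structural fact that finiteness of $M^{T^r}$ forces the columns of $A$ to enumerate all of $\{\pm1\}^r$. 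One small remark: that pigeonhole step, while true and illuminating, is not actually needed for the bare conclusion $M^{T^r}=M^{T^{n+1}}$; once the columns are pairwise distinct you already have $M^{T^r}=\{[e_1],\dots,[e_{n+1}]\}$.
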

\begin{proof}
Let $\Phi:M \rightarrow \mathfrak{t}^*_{n+1}$ be the moment map for $T^{n+1}$-action and $X \in \mathfrak{t}_{n+1}$ be a \textit{generic} element, such that $\langle \beta , X \rangle \neq 0$ for each weight $\beta \in \mathfrak{t}^*_{n+1}$ of $T^{n+1}$ action on $T_pM$, for every $p$ in the fixed point set $M^{T^{n+1}}$. Then, the component of the moment map $\phi_X$ along $X$, given by $\phi_X(p) = \langle \phi(p), X \rangle$, is a perfect Morse function with critical set $M^{T^{n+1}}$, i.e. $\sum{\text{dim}\, H^i(M)} = |M^{T^{n+1}}|$. Similarly, let $\mu_T= pr_r \circ \Phi$ be the moment map for the $T^r$-action, which can be obtained by the projection to $\mathfrak{t}^*_r$ of the moment map $\Phi$. Then for any generic value $Y \in \mathfrak{t}_r$ we obtain the component $\mu_T^Y$, which is also a perfect Morse function on $M$. Thus, $\sum{\text{dim}\, H^i(M)} = |M^{T^{n+1}}|=|M^{T^r}|$. Since evidently $M^{T^{n+1}} \subset M^{T^r}$, the sets are equal.
\end{proof}
Hence, $M^{T^r} = \left\{ [0: \cdots:z_j: \cdots:0] \in \mathbb{P}_n | \, z_j=1, \, j=1, \cdots, n+1 \right\}$. The Abelian  moment map $\mu_T: \mathbb{P}_n \rightarrow \mathfrak{t}^*_r$ for the $T^r$-action is given by
\begin{equation} \label{moment-map}
\mu_T(p) = \frac{1}{2} \sum_{j=1}^{n+1}{|z_j|^2 \alpha_j},
\end{equation}
where $\alpha_j=(a_{1,j},a_{2,j}, \cdots,a_{r,j})^T$, for $j=1, \cdots,n+1$ are weights of the representation of $T^r$ on $\mathbb{C}^{n+1}$ \cite{kirwan1984}. In particular, for the fixed points $p_j \in M^{T^r}$, the \textit{isotropy weights} $w_{p_j} = \left\{ (w_{k,p_j}) \right\}_{k=1}^{n}$ of the infinitesimal $T^r$ action on $T_pM$ are obtained by 
\begin{equation} \label{isotropy-weights}
w_{k,p_j} = \left( \begin{matrix}
a_{1,k} - a_{1,j} \\
a_{2,k} - a_{2,j} \\
\vdots \\
a_{r,k} - a_{r,j} 
\end{matrix}
\right)_{r \times 1}, \quad k \neq j.
\end{equation}
The torus $T^r$-action on $\mathbb{P}_n$ is not a Goresky-Kottwitz-MacPherson (GKM)-type action, since the isotropy weights $w_{p_j} = \left\{ (w_{k,p_j}) \right\}_{k=1}^{n}$ in Eq. \eqref{isotropy-weights}, for every $p_j \in M^T$, are not pairwise linearly independent in $\mathfrak{t}^*$, on the contrary to the $T^{n+1}$-action as a toric variety \cite{gkm1998}. The moment polytope $\mu_T(\mathbb{P}_n)$ is the hypercube 
\[
\Delta \equiv \text{conv}.\left\{(x_1,\cdots,x_r) \in \mathbb{R}^r : \, x_i=\pm 1/2 \right\},
\]
spanned by $2^r$ vertices as the image of the fixed points under the moment map \cite{sawicki2015}. The codimension-one walls divide the moment polytope $\Delta$ into subpolytopes $\Delta_{\text{reg}}$, whose interior consists of entirely regular values of the Abelian moment map $\mu_T$.

Before describing the wall-crossing procedure, let us mention that the following lemma provides us with a dimensional criterion for the symplectic reduced space associated to a regular value of the moment map:
\begin{lem} \label{dimensional-condition}
Let $(\mathbb{P}_n,\omega,\mu_T,T^r)$ be a Hamiltonian torus $T$-manifold, such that $n=2^r-1$. Then, for $\xi$ as a regular value of $\mu_T: \mathbb{P}_n \rightarrow \mathfrak{t}^*$ in the hypercube (or $r$-cube) $\mu_T(M) \subset \mathbb{R}^r$, we have $\text{dim}_{\mathbb{R}}\, (M_{\xi}) =2^{r+1}-2r-2$, where $M_{\xi}=\mu_T^{-1}(\xi)/T$.
\end{lem}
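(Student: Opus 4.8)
The plan is to reduce this to the standard dimension count for symplectic reduction at a regular value, combined with the numerology $\dim_{\mathbb{R}}\mathbb{P}_n = 2n$ and $n = 2^r - 1$. So the whole statement is essentially the identity $\dim_{\mathbb{R}} M_\xi = \dim_{\mathbb{R}} M - 2\dim T^r$ for regular $\xi$, specialized to the case at hand.

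First I would record that, since $\xi$ is a regular value of $\mu_T$, the differential $(d\mu_T)_p \colon T_p\mathbb{P}_n \to \mathfrak{t}^*_r$ is surjective for every $p \in \mu_T^{-1}(\xi)$; hence $\mu_T^{-1}(\xi)$ is a smooth submanifold of $\mathbb{P}_n$ of real codimension $\dim T^r = r$, so $\dim_{\mathbb{R}}\mu_T^{-1}(\xi) = 2n - r$. The key intermediate point is then that at such a $p$ the image of $(d\mu_T)_p$ equals the annihilator $\mathrm{ann}(\mathfrak{t}_p) \subset \mathfrak{t}^*_r$ of the Lie algebra $\mathfrak{t}_p$ of the stabilizer $T_p$ (the infinitesimal form of equivariance of the moment map). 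Surjectivity of $(d\mu_T)_p$ forces $\mathfrak{t}_p = 0$, i.e. every stabilizer $T_p$ with $p \in \mu_T^{-1}(\xi)$ is finite, so $T^r$ acts locally freely on the level set.

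Consequently the quotient $M_\xi = \mu_T^{-1}(\xi)/T^r$ is the Marsden--Weinstein reduction recalled in Section \ref{recursive}, a symplectic orbifold of dimension $\dim_{\mathbb{R}}\mu_T^{-1}(\xi) - \dim T^r = (2n-r) - r = 2n - 2r$, with at most orbifold singularities as asserted there. Substituting $n = 2^r - 1$ yields $\dim_{\mathbb{R}}(M_\xi) = 2(2^r - 1) - 2r = 2^{r+1} - 2r - 2$, as claimed.

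I do not expect a genuine obstacle here; the only step deserving care is the local-freeness of the $T^r$-action on $\mu_T^{-1}(\xi)$, but as indicated this is automatic from $\xi$ being a regular value and does not require separately invoking the (assumed) effectiveness of the $T^r$-action encoded by the character matrix $A$. An alternative, even shorter route is to quote the general symplectic-reduction dimension formula $\dim M /\!\!/_\xi T = \dim M - 2\dim T$ for regular $\xi$ and simply insert $\dim_{\mathbb{R}}\mathbb{P}_n = 2(2^r-1)$ and $\dim T^r = r$.
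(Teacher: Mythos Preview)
Your proof is correct and follows essentially the same approach as the paper: both use the standard Marsden--Weinstein dimension count $\dim_{\mathbb{R}} M_{\xi} = \dim_{\mathbb{R}} \mathbb{P}_n - 2\dim T^r$ and substitute $n = 2^r - 1$. The paper's own proof is in fact just your ``alternative, even shorter route'' written as a single displayed equation, without the intermediate justification of local freeness that you supply.
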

\begin{proof}
\begin{equation}
\text{dim}_{\mathbb{R}}\, (M_{\xi}) = \text{dim} \, (\mu_T^{-1}(\xi)/T) = \text{dim}_{\mathbb{R}} \, \mathbb{P}_n - \text{dim} \, \mu_T(\mathbb{P}_n) - \text{dim} \, (T^r) =2^{r+1}-2r-2.
\end{equation}
\end{proof}

Regarding definition \ref{recursive-invariant}, theorem \ref{circle-recursive} and the recursive wall-crossing formula for change in the Poincar\'e polynomials of symplectic reduced spaces $M_{\xi}$ in Eqs. \eqref{recursive-wallcrossing-formula} and \eqref{wall-crossing-function}, we can start from outside the hypercube (or $r$-cube) $\Delta$ and cross a finite number of lower dimensional walls. In fact, this gives an algorithm for calculating a recursive invariant, i.e. the Poincar\'e polynomials, on all sub-chambers of a weighted X-ray by starting from the vertices and crossing higher dimensional walls recursively \cite{metzler2000}. The key is then to find the wall-crossing function $C_I(b,f)$ each time we cross a wall. A few examples will be discussed in details in the next section \ref{examples}.

Now, consider an $i$-dimensional wall in the $r$-cube, such that $i=0, \cdots,r-1$. Then, for each $\xi_{i+1}$ in a $(i+1)$-dimensional sub-chamber, whose closure contains the mentioned $i$-dimensional wall, we can consider a corresponding exterior point $\eta_{i+1} \notin \Delta$ connected with $\xi_{i+1}$ through a line crossing the $i$-dimensional wall. This follows from the fact that any sub-chamber of dimension $i+1$ is accessible from outside the polytope by crossing a finite number of dimension $i$ walls \cite{metzler2000}. By definition, the recursive invariants for those external points are zero. Next, is to find values of the wall-crossing function $C_i(b_i,f_i)$ each time we cross an $i$-dimensional wall. The fact is that the number of positive weights along the circle action connecting $\eta_{i+1}$ and $\xi_{i+1}$ is $2^{i}$ and the number of negative weights is always zero. Namely,
\begin{equation}
C_i(b_i,f_i) = C_{i}(0,2^{i}) = 1+t^2+ \cdots + t^{2^{i+1}-2}.
\end{equation}

In other words, the recursive procedure can be described as follows:
\begin{itemize}
\item $i=0$: then $\xi_1 \in \text{$1$-dimensional boundary of } \Delta$ and $f_0=1,b_0=0$. Therefore,
\[
P_t(M_{\xi_1}) - \underbrace{P_t(M_{\eta_1})}_\textrm{$0$} = P_t(F_j) \, \underbrace{C_0(0,1)}_\textrm{$1$} = 1.
\]
Evidently, the $0$-dimensional walls or the vertices of the polytope $\Delta = \mu_T(\mathbb{P}_n)$ are the image under $\mu_T$ of the fixed points under torus $T^r$-action. Hence, we have $P_t(F_l)=1$, for $l=0, \cdots,n+1$.

\item $i=1$: then $\xi_2 \in \text{$2$-dimensional boundary of } \Delta$ and $f_1=2,b_1=0$. Therefore,
\[
P_t(M_{\xi_2}) - \underbrace{P_t(M_{\eta_2})}_\textrm{$0$} = \underbrace{P_t(M_{\xi_1})}_\textrm{$1$} \, \underbrace{C_1(0,2)}_\textrm{$1+t^2$} = 1+t^2.
\]

\item $i=2$: then $\xi_3 \in \text{$3$-dimensional boundary of } \Delta$ and $f_2=4,b_2=0$. Therefore,
\[
P_t(M_{\xi_3}) - \underbrace{P_t(M_{\eta_3})}_\textrm{$0$} = \underbrace{P_t(M_{\xi_2})}_\textrm{$1+t^2$} \, \underbrace{C_2(0,4)}_\textrm{$1+t^2+t^4+t^6$} = 1+2t^2+2t^4+2t^6+t^8.
\]
\item Repeating the same procedure recursively, for a general $i$ we will have $\xi_{i+1} \in (i+1)$-dimensional boundary of $\Delta$ and $f_i=2^{i},b_i=0$. Therefore,
\[
P_t(M_{\xi_{i+1}}) - \underbrace{P_t(M_{\eta_{i+1}})}_\textrm{$0$} = \prod_{m=0}^{i}{1+t^2+t^4+ \cdots + t^{2^{m+1}-2}}.
\]
\end{itemize}
Hence, for $\xi$ as a regular value of the moment map $\mu_T$, the Poincar\'e polynomial for the associated symplectic reduced space $M_{\xi}$ can be obtained recursively from the following wall-crossing formula
\begin{equation} \label{recursive-wall-crossing-formula-abelian}
P_t(M_{\xi}) = \prod_{i=0}^{r-1}\left( \sum_{j=0}^{2^i-1}{t^{2j}} \right), 
\end{equation}  
satisfying both the Poincar\'e duality and the dimensional constraint in lemma \ref{dimensional-condition}. 

By replacing $t=-1$ we will obtain a wall-crossing function for the Euler characteristic as follows:
\begin{equation} \label{euler-characteristic-function}
\omega_{\chi}(b,f) = b-f.
\end{equation}
\begin{cor} \label{euler-characteristic}
The Euler characteristic of the symplectic quotients $M_{\xi}$, for a regular $\xi \in \Delta_{\text{reg}}$, is given by
\begin{equation} \label{euler-characteristic-wall-crossing}
\chi(M_{\xi}) = \prod_{i=0}^{r-1}{2^i}.
\end{equation}
\end{cor}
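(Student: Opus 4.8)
The plan is to read off Corollary \ref{euler-characteristic} directly from the recursive wall-crossing formula \eqref{recursive-wall-crossing-formula-abelian} for the Poincar\'e polynomial, rather than re-running the recursion from scratch. The two ingredients to combine are: (i) for a compact symplectic quotient (possibly with orbifold singularities) whose rational cohomology is finite-dimensional, the Euler characteristic is the specialization $\chi(M_\xi)=P_t(M_\xi)\big|_{t=-1}$; and (ii) the explicit product $P_t(M_\xi)=\prod_{i=0}^{r-1}\bigl(\sum_{j=0}^{2^i-1}t^{2j}\bigr)$ already established. So the first step is simply to set $t=-1$ in this product. Since each inner sum contains only even powers of $t$, every monomial $t^{2j}$ equals $1$ at $t=-1$, the $i$-th factor collapses to $\sum_{j=0}^{2^i-1}1=2^i$, and multiplying over $i=0,\dots,r-1$ gives $\chi(M_\xi)=\prod_{i=0}^{r-1}2^i=2^{r(r-1)/2}$, which is the claim.

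Alternatively --- and this is the route the discussion preceding the corollary sets up --- one can run the recursion \eqref{recursive-wall-crossing-eq} with the Euler-characteristic wall-crossing function obtained by specializing $C(b,f)$ at $t=-1$. The one point that needs care here is that $C(b,f)=(t^{2b}-t^{2f})/(1-t^2)$ is a $0/0$ form at $t=-1$, so one must use the resolved piecewise expression in \eqref{wall-crossing-function}: for $f>b$ it is a sum of $f-b$ even-power monomials, each equal to $1$ at $t=-1$, and the case $b>f$ is symmetric, producing $\omega_\chi(b,f)$ as in \eqref{euler-characteristic-function}. Feeding in the weight counts $b_i=0$, $f_i=2^i$ for the $i$-dimensional wall-crossings, together with $\chi(F_l)=1$ at the vertices, the recursion multiplies the running Euler characteristic by $2^i$ at stage $i$, so that after the $r$ crossings one again obtains $\prod_{i=0}^{r-1}2^i$, consistent with the first route.

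There is essentially no serious obstacle here: the geometric input --- the fixed-point set of the $T^r$-action, the weight counts, and the recursive structure of the invariant --- has already been discharged in deriving \eqref{recursive-wall-crossing-formula-abelian}. The only thing worth double-checking is the passage from $P_t$ to $\chi$: because the cohomology is concentrated in even degrees (so in fact $\chi$ equals the total Betti number $P_t(M_\xi)\big|_{t=1}$ and no cancellation occurs), and because the $t\to-1$ limit of the wall-crossing function is taken consistently, the two computations agree. This even-degree concentration is itself manifest from the fact that \eqref{recursive-wall-crossing-formula-abelian} has only even powers of $t$, so the verification is immediate.
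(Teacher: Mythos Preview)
Your proposal is correct and matches the paper's own (implicit) argument: the corollary is stated immediately after the sentence ``By replacing $t=-1$ \ldots'' following \eqref{recursive-wall-crossing-formula-abelian}, so the intended proof is precisely your first route---specialize the product formula for $P_t(M_\xi)$ at $t=-1$ and note that each factor $\sum_{j=0}^{2^i-1}t^{2j}$ becomes $2^i$. Your second route via the specialized wall-crossing function is a faithful unpacking of the same idea (and in fact your computation gives $C(b,f)|_{t=-1}=f-b$, which is the sign needed to get $+2^i$ at each step; the $b-f$ in \eqref{euler-characteristic-function} appears to be a typo in the paper).
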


\subsection{Examples} \label{examples}

In this sub-section, we describe the recursive wall-crossing procedure for the following examples: 

\begin{enumerate}
\item [(a)] Let's consider the torus $T^2=S^1 \times S^1$ acting on the complex projective manifold $\mathbb{P}_3$ via the homomorphism $\varphi: T^2 \rightarrow U(4)$ as follows
\begin{equation} \label{action-example1}
(t_1,t_2) \mapsto \left( t_1t_2, t_1t_2^{-1}, t_1^{-1}t_2,t_1^{-1}t_2^{-1}\right).
\end{equation}
The moment polytope $\mu_T(\mathbb{P}_3)$, which is shown in Fig. \ref{2-cube}, is a square as a convex polytope $\Delta_2 \equiv \text{conv}.\left\{(x_1,x_2) \in \mathbb{R}^2 : \, x_i=\pm 1/2 \right\}$ spanned by $4$ vertices as the image of fixed points under the moment map. Since the complex dimension of the manifold $M=\mathbb{P}_3$ is three, at each vertex there exist three weights emanating from them towards other fixed points.
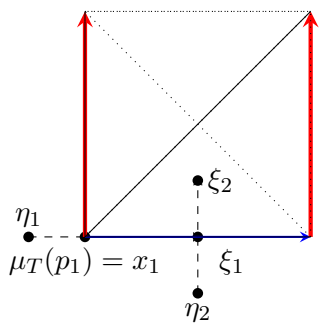
\begin{figure}[h]
\centering
\begin{tikzpicture}[
  vector/.style={ultra thick,black,>=stealth,->},
  atom/.style={blue}, x=3cm,y=3cm
  ]
    \coordinate (p0) at (0,0);
    \node[below] at (p0) {$\mu_T(p_1) =x_1$};
    \fill[black] (p0) circle (2pt);
    \coordinate (p1) at (1,0);
    \coordinate (p2) at (0,1);
    \coordinate (p3) at (1,1);
    
    \coordinate (a1) at (-0.25,0);
    \node[above] at (a1) {$\eta_1$};
    \fill[black] (a1) circle (2pt);
    
    \coordinate (a2) at (0.5,0);
    \node[below=.5em of a2] (dummy) {};
    \node[right=.005 of dummy] {$\xi_1$};
    \fill[black] (a2) circle (2pt);    
    \draw[dashed,->] (a1) -- (a2);
    
        \coordinate (a3) at (0.5,-0.25);
        \node[below] at (a3) {$\eta_2$};
        \fill[black] (a3) circle (2pt);
    
        \coordinate (a4) at (0.5,0.25);
        \node[right] at (a4) {$\xi_2$};
        \fill[black] (a4) circle (2pt);
        \draw[dashed,->] (a3) -- (a4);
        
    \draw[thick,blue,>=stealth,->] (p0) -- (p1);
    \draw[ultra thick,red,>=stealth,->] (p0) -- (p2);
    \draw[ultra thick,red,>=stealth,->] (p1)--(p3);        
    
    \draw[black] (p0) -- (p1);
    \draw[black] (p0) -- (p2);
    \draw[black] (p0) -- (p3);
    \draw[dotted] (p0) -- (p3);
    \draw[dotted] (p1) -- (p2);
    \draw[dotted] (p3) -- (p2);
    \draw[dotted] (p1) -- (p3);
    \draw[dotted] (p2) -- (p3);
\end{tikzpicture}
\caption{$\mu_T(\mathbb{P}_3)\equiv \Delta_2$, for $M = \mathbb{P}_3$ with $T^2$ Hamiltonian action}
\label{2-cube} 
\end{figure}

Regarding the Eq. \eqref{recursive-wall-crossing-formula-abelian}, or following the recursive procedure illustrated in Fig. \ref{2-cube} and described in the previous section \ref{recursive-poincare}, one can find
\begin{equation} \label{poincare-example1}
P_t(M_{\xi_2}) = \prod_{i=0}^{1}\left( \sum_{j=0}^{2^i-1}{t^{2j}} \right) = 1+t^2.
\end{equation}
Considering corollary \ref{euler-characteristic} and Eq. \eqref{euler-characteristic-wall-crossing}, the Euler characteristic of $M_{\xi_2}$ is equal to $2$. 

\item[(b)] Now, consider the torus $T^3$ acting on the K\"ahler manifold $\mathbb{P}_7$ via the homomorphism $\varphi: T^3 \rightarrow U(8)$ as follows
\begin{equation} \label{action-example}
(t_1,t_2,t_3) \mapsto \left( \prod_{i=1}^{3}t_i^{a_{i,1}}, \prod_{i=1}^{3}t_i^{a_{i,2}}, \cdots, \prod_{i=1}^{3}t_i^{a_{i,8}}\right),
\end{equation}
for some $a_{i,j} \in \left\{ \pm 1 \right\}$, where $8 = \sum_{k=0}^{3}{\binom{3}{k}}=2^3$. The moment polytope $\mu_T(\mathbb{P}_7)$ (depicted in the following Fig. \ref{3-cube-example}) is the $3$-cube 
\[
\Delta_3 \equiv \text{conv}.\left\{(x_1,x_2,x_3) \in \mathbb{R}^3 : \, x_i=\pm 1/2 \right\},
\]
spanned by $2^3$ vertices as the images of fixed points under the moment map. The codimension-one walls divide the moment polytope $\Delta$ into subpolytopes $\Delta_{\text{reg}}$, whose interior consists of entirely regular values of the Abelian moment map $\mu_T$.
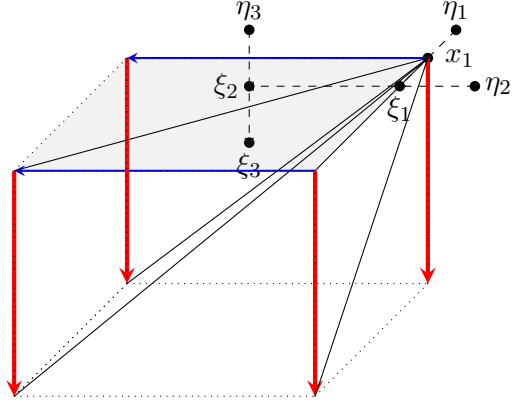
\begin{figure}[h]
\centering
\begin{tikzpicture}[
  vector/.style={ultra thick,black,>=stealth,->},
  atom/.style={blue}, x=4cm,y=3cm,z=1.5cm
  ]
    \coordinate (p0) at (0,0,0);
    \coordinate (p1) at (1,0,0);
    \coordinate (p2) at (0,1,0);
    \coordinate (p3) at (0,0,1);

    \coordinate (p4) at (1,1,0);
    \coordinate (p5) at (1,0,1);
    \coordinate (p6) at (0,1,1);
    \coordinate (p7) at (1,1,1);
    \node[right=.25em] at (p7) {$x_1$};
    \fill[black] (p7) circle (2pt);
    
    \coordinate (a1) at (1,1,1.25);
    \node[above] at (a1) {$\eta_1$};
    \fill[black] (a1) circle (2pt);

    \coordinate (a2) at (1,1,0.75);
    \node[below] at (a2) {$\xi_1$};
    \fill[black] (a2) circle (2pt);    
    \draw[dashed,->] (a1) -- (a2);
    
    \coordinate (a3) at (1.25,1,0.75);
    \node[right] at (a3) {$\eta_2$};
    \fill[black] (a3) circle (2pt);

    \coordinate (a4) at (0.5,1,0.75);
    \node[left] at (a4) {$\xi_2$};
    \fill[black] (a4) circle (2pt);
    \draw[dashed,->] (a3) -- (a4);

    \coordinate (a5) at (0.5,1.25,0.75);
    \node[above] at (a5) {$\eta_3$};
    \fill[black] (a5) circle (2pt);

    \coordinate (a6) at (0.5,0.75,0.75);
    \node[below] at (a6) {$\xi_3$};
    \fill[black] (a6) circle (2pt);
    \draw[dashed,->] (a5) -- (a6);
    
\draw[thick,blue,>=stealth,->] (p7) -- (p6);
\draw[thick,blue,>=stealth,->] (p4) -- (p2);
\draw[ultra thick,red,>=stealth,->] (p7) -- (p5);
\draw[ultra thick,red,>=stealth,->] (p2) -- (p0);
\draw[ultra thick,red,>=stealth,->] (p4) -- (p1);
\draw[ultra thick,red,>=stealth,->] (p6) -- (p3);

\fill[gray,opacity=0.1] (p7) -- (p6) -- (p2) -- (p4);

    \draw[dotted] (p0) -- (p1);
    \draw[dotted] (p0) -- (p2);
    \draw[dotted] (p0) -- (p3);
    \draw[dotted] (p0) -- (p3);
    \draw[dotted] (p3) -- (p5);
    \draw[dotted] (p3) -- (p6);
    \draw[dotted] (p6) -- (p2);
    \draw[dotted] (p5) -- (p1);
    \draw[black] (p7) -- (p4);
    \draw[dotted] (p5) -- (p7);
    \draw[dotted] (p1) -- (p4);
    \draw[dotted] (p6) -- (p7);
    \draw[dotted] (p2) -- (p4);
    \draw[black] (p7) -- (p1);
    \draw[black] (p7) -- (p2);
    \draw[black] (p7) -- (p3);
    \draw[black] (p7) -- (p0);

\end{tikzpicture}
\caption{$\mu_T(\mathbb{P}_7)\equiv \Delta_3$, for $M = \mathbb{P}_7$ with $T^3$ Hamiltonian action}
\label{3-cube-example} 
\end{figure}

As shown in Fig. \ref{3-cube-example}, by following the recursive procedure discussed in the previous section \ref{recursive-poincare}, or simply by using the wall-crossing formula in Eq. \eqref{recursive-wall-crossing-formula-abelian}, one can find
\begin{equation} \label{poincare-example2}
P_t(M_{\xi_3}) = \prod_{i=0}^{2}\left( \sum_{j=0}^{2^i-1}{t^{2j}} \right)= 1+2t^2+2t^4+2t^6+t^8.
\end{equation}
Considering corollary \ref{euler-characteristic} and the Eq. \eqref{euler-characteristic-wall-crossing}, the Euler characteristic of $M_{\xi_3}$ is equal to $8$.

\item[(c)] Repeating the same procedure computationally, we can find the associated Betti numbers and the Euler characteristic $\chi$ of the Abelian symplectic quotients $M_{\xi}=\mu_T^{-1}(\xi)/T$, for $r=4,\cdots,8$, which are summarized in the following table:
\begin{center}
\begin{longtable}{|c|p{3.65in}|c|c|}

\hline

\textbf{$r$} & \textbf{List of even-degree Betti numbers of $\mu_T^{-1}(\xi)/T$} & \textbf{$\chi(M_{\xi})$} & dim$_{\mathbb{R}}(M_{\xi})$\\

\hline \hline

4 & \{1,3,5,7,8,8,8,8,7,5,3,1\} & 64 & 22\\

\hline

5 & \{1,4,9,16,24,32,40,48,55,60,63,64,64,64,64,64, \newline 63,60,55,48,40,32,24,16,9,4,1\} & 1024 & 52\\

\hline

6 & \{1,5,14,30,54,86,126,174,229,289,352,416,480, \newline 544,608,672,735,795,850,898,938,970,994,1010,1019, \newline 1023,1024,1024,1024,1024,1024,1024,1023,1019,1010,\newline 994,970,938,898,850,795,735,672,608,544,480,416,  \newline 352,289,229,174,126,86,54,30,14,5,1\} & 32768 & 114 \\

\hline

7 & \{1,6,20,50,104,190,316,490,719,1008,1360,1776,  \newline 2256,2800,3408,4080,4815,5610,6460,7358,8296,9266,  \newline 10260,11270,12289,13312,14336,15360,16384,17408,  \newline 18432,19456,20479,21498,22508,23502,24472,25410,  \newline 26308,27158,27953,28688,29360,29968,30512,30992,  \newline 31408,31760,32049,32278,32452,32578,32664,32718,  \newline 32748,32762,32767,32768,32768,32768,32768,32768,  \newline 32768,32768,32767,32762,32748,32718,32664,32578,  \newline 32452,32278,32049,31760,31408,30992,30512,29968,  \newline 29360,28688,27953,27158,26308,25410,24472,23502,  \newline 22508,21498,20479,19456,18432,17408,16384,15360,  \newline 14336,13312,12289,11270,10260,9266,8296,7358,6460,  \newline 5610,4815,4080,3408,2800,2256,1776,1360,1008,719,  \newline 490,316,190,104,50,20,6,1\} & 2097152 & 240 \\

\hline 

8 & \{1,7,27,77,181,371,687,1177,1896,2904,4264,  \newline 6040,8296,11096,14504,18584,23399,29009,35469,  \newline 42827,51123,60389,70649,81919,94208,107520,121856, \newline 137216,153600,171008,189440,208896,229375,250873, \newline 273381,296883,321355,346765,373073,400231,428184, \newline 456872,486232,516200,546712,577704,609112,640872,  \newline 672921,705199,737651,770229,802893,835611,868359,  \newline 901121,933888,966656,999424,1032192,1064960,1097728,  \newline 1130496,1163264,1196031,1228793,1261541,1294259,  \newline 1326923,1359501,1391953,1424231,1456280,1488040,  \newline 1519448,1550440,1580952,1610920,1640280,1668968,  \newline 1696921,1724079,1750387,1775797,1800269,1823771,  \newline 1846279,1867777,1888256,1907712,1926144,1943552,  \newline 1959936,1975296,1989632,2002944,2015233,2026503,  \newline 2036763,2046029,2054325,2061683,2068143,2073753,  \newline 2078568,2082648,2086056,2088856,2091112,2092888,  \newline 2094248,2095256,2095975,2096465,2096781,2096971,  \newline 2097075,2097125,2097145,2097151,2097152,2097152,  \newline 2097152,2097152,2097152,2097152,2097152,2097152,  \newline 2097151,2097145,2097125,2097075,2096971,2096781,  \newline 2096465,2095975,2095256,2094248,2092888,2091112,
\newline 
2088856,2086056,2082648,2078568,2073753,2068143,  \newline 2061683,2054325,2046029,2036763,2026503,2015233, \newline 2002944,1989632,1975296,1959936,1943552,1926144,  \newline 1907712,1888256,1867777,1846279,1823771,1800269,  \newline 1775797,1750387,1724079,1696921,1668968,1640280,  \newline 1610920,1580952,1550440,1519448,1488040,1456280,  \newline 1424231,1391953,1359501,1326923,1294259,1261541,  \newline 1228793,1196031,1163264,1130496,1097728,1064960,  \newline 1032192,999424,966656,933888,901121,868359,835611,  \newline 802893,770229,737651,705199,672921,640872,609112,  \newline 577704,546712,516200,486232,456872,428184,400231,  \newline 373073,346765,321355,296883,273381,250873,229375,  \newline 208896,189440,171008,153600,137216,121856,107520,  \newline 94208,81919,70649,60389,51123,42827,35469,29009,  \newline 23399,18584,14504,11096,8296,6040,4264,2904,1896,  \newline 1177,687,371,181,77,27,7,1\}
& 268435456 & 494 \\
\hline
\end{longtable}
\end{center}
\end{enumerate}

\section{Summary and Outlooks} \label{summary}

In this paper, we obtained recursive wall-crossing formulas for the Poincar\'e polynomials and the Euler characteristics of Abelian symplectic quotients associated to a complex projective space of dimension $n$ acted upon in a Hamiltonian fashion by a sub-torus $T^r$ of the natural torus $T^{n+1}$, such that $2^r=n+1$. In fact, the torus $T^r$ action is an effective action and the associated symplectic quotients $M_{\xi}$, for $\xi \in \Delta_{\text{reg}}$ as regular values of the moment map $\mu_T$, possess at most orbifold singularities. The $k$-th coefficient of the Poincar\'e polynomials, i.e. the $k$-th Betti number, effectively counts the number of $k$-dimensional cycles in the corresponding symplectic quotients.

From the physical point of view, by varying the set $\xi$ of diagonal elements of reduced density matrices for an isolated $r$-qubit system inside the moment polytope ($r$-cube) and crossing critical walls between chambers of $\Delta_{\text{reg}} \subset r$-cube, we topologically classified the associated symplectic reduced spaces. In other words, the set $\xi$ encodes local phases, or equivalently the relative phases of isolated $r$ qubits, and by varying them inside their domain in a hyper-cube, we utilized a recursive wall-crossing procedure to obtain topological invariants, i.e. Poincar\'e polynomials and Euler characteristics, of the corresponding spaces of pure multi-qubit states.

However, one has to note that generalization of the wall-crossing procedure described above and especially the Abelian Poincar\'e polynomials in Eq. \eqref{recursive-wallcrossing-formula} to non-abelian group action is highly nontrivial in the sense that non-abelian moment values $\alpha = \mu(p) \in \mu(M) \cap \mathfrak{t}^*_+$, where $\mu: M \rightarrow \mathfrak{k}^*$ is the non-abelian equivariant moment map, are not necessarily regular values and therefore, the quotients $M_{\alpha} = \mu^{-1}(\alpha)/K_{\alpha}$ are stratified symplectic spaces \cite{sjamaar1991}. Therefore, to study their topological properties one can refer to intersection cohomology $IH^*(.)$, which satisfies all the properties of the cohomology groups of non-singular complex projective spaces \cite{jeffrey2003}. Finding the topological invariants, such as Poincar\'e polynomial or Euler characteristic for the non-abelian Local Unitary group action, where $T^r$ is the maximal torus, will be studied elsewhere.

\end{document}